\newtheorem{property}{Property}
\newtheorem{proposition}{Proposition}
\newlength\figwidth
\newlength\imagewidth
\definecolor{dgreen}{rgb}{0,.6,0}
\begin{document}

\begin{frontmatter}

\title{Cryptanalyzing a chaos-based image encryption algorithm using alternate structure}

\author[cn-xtu-ma,cn-xtu-icip]{Yu Zhang}
\author[cn-xtu-cie]{Chengqing Li\corref{corr}}
\ead{chengqingg@gmail.com}

\author[hk-cityu]{Kwok-Wo Wong}

\author[cn-xtu-ma]{Shi Shu}

\author[hk-cityu]{Guanrong Chen}

\cortext[corr]{Corresponding author.}

\address[cn-xtu-ma]{School of Mathematics and Computational Science,
Xiangtan University, Xiangtan 411105, Hunan, China}

\address[cn-xtu-icip]{MOE (Ministry of Education) Key Laboratory of Intelligent Computing and Information Processing, Xiangtan University, China}

\address[cn-xtu-cie]{College of Information Engineering,
Xiangtan University, Xiangtan 411105, Hunan, China}

\address[hk-cityu]{Department of Electronic Engineering, City University of Hong Kong, Hong Kong}

\begin{abstract}
Recently, a chaos-based image encryption algorithm using alternate structure (IEAS) was
proposed. This paper focuses on differential cryptanalysis of the algorithm and finds that
some properties of IEAS can support a differential attack to recover equivalent secret key with
a little small number of known plain-images. Detailed approaches of the cryptanalysis for cryptanalyzing IEAS
of the lower round number are presented and the breaking method can be extended to the case of higher round number.
Both theoretical analysis and experiment results are provided to support vulnerability of IEAS against differential attack.
In addition, some other security defects of IEAS, including insensitivity with respect to changes of plain-images and insufficient
size of key space, are also reported.
\end{abstract}

\begin{keyword}
image \sep chaos \sep cryptanalysis \sep differential attack \sep encryption
\end{keyword}
\end{frontmatter}

\section{Introduction}

Security of multimedia data including image and video become more and more important as transmission of multimedia data occurs more and more frequently in the current digital world. However, the big differences between multimedia data and text, such as bulk size of multimedia data and strong redundancy existing in neighboring elements of its uncompressed version, make the traditional text encryption algorithms like DES (Data Encryption Standard) can not protect multimedia data efficiently. In addition, multimedia encryption has other special requirements, like fast encryption speed and easy cascade with the whole system. So, designing specific multimedia encryption algorithm become an urgent task. Meanwhile, chaos theory was developed in depth in the 1960s. The most famous character of chaos is so-called ``butterfly effect", i.e., states of a chaos system are very sensitive to changes of its initial conditions and control parameters. This character is very similar to the confusion and diffusion property of a cryptosystem measuring sensitivity of encryption results with respect to change of the secret key and the plaintext. The subtle similarity inspired researchers design secure multimedia encryption algorithms by combing chaos and cryptography.

Due to simple syntax of uncompress image and easy extension of image encryption scheme to other multimedia data, most chaos-based
multimedia encryption scheme consider image data as encryption object. In the past decade, hundreds of chaos-based image encryption schemes
have been proposed \cite{Fridrich:ChaoticImageEncryption:IJBC98,YaobinMao:CSF2004}. In general, the usage of chaos in designing image encryption
schemes can be categorized as the following three classes:
\begin{itemize}
\item creating position permutation matrices \cite{Fridrich:ChaoticImageEncryption:IJBC98,Wang:BreakMao:PLA05,SolakErcan:AnaFridrich:BAC10,Lcq:Optimal:SP11};

\item generating pseudo-random bit sequence, which is then used to control combination and composition
of some basic arithmetical operations like modulo addition and exclusive or operation \cite{Lisj:BRIE:IEEESCS02,LiShujun:ICIP2002,Li:AttackingRCES2004,Rhouma:BreakLian:PLA08,GaLsj:AnaNCA:SANS09,
LiLi:IVC2009b,Lcq:YenChenWu:SAS10,SolakErcan:AnaFridrich:BAC10}.

\item producing ciphertext directly when plain-bytes of image are converted to initial condition and control parameters
of a chaotic map \cite{DavidLsj:Securityofchaotic:CHAOS08,SolakErcan:Algebraic:IS11}.
\end{itemize}
Some general rules about evaluating security of chaos-based encryption algorithms can be found in \cite{AlvarezLi:Rules:IJBC2006}.

In \cite{Zhang:ImageCrypt:SCSF07}, a new image encryption algorithm using alternate structure (IEAS) based on the general cat-map and OCML (One-way Coupled Map Lattice) was proposed, where the two maps are used for realizing position permutation/diffusion and value substitution respectively. Essentially, structure of IEAS belongs to Feistel networks, i.e., an iterated block cipher where the output of the current round is determined by that of the previous one. This paper focuses on security analysis of IEAS and founds that some properties of IEAS, existing when its integer parameter is even, can be used to support a differential attack to recover equivalent secret key of IEAS with a little number of known/chosen plain-images. The detailed approaches of the differential attack are presented in detail when the round number of IEAS is less than or equal to four. In addition, the cryptanalysis also find some other security defects of IEAS, like insensitivity with respect to changes of plain-images and insufficiently large key space.

The rest of this paper is organized as follows. The next section introduces the image encryption algorithm under study, IEAS, briefly. Section~3 present the comprehensive cryptanalysis on the algorithm with some experiment results. The last section concludes the paper.

\section{IEAS encryption algorithm}
\label{sec:scheme}

The plain-image encrypted by IEAS encryption algorithm is a gray-scale image of size $N\times 2N$ (height$\times$width), which can be denoted by an $N\times 2N$ matrix in domain $\mathbb{Z}_{256}$. The encryption algorithm divides the plain-image into two parts of the same size: $\bm{L}=[L(i,j)]_{i=0, j=0}^{N-1,N-1}$ and $\bm{R}=[R(i,j)]_{i=0, j=0}^{N-1,N-1}$. The corresponding cipher-image is composed of two parts also: $\bm{l}=[l(i,j)]_{i=0, j=0}^{N-1,N-1}$ and $\bm{r}=[r(i,j)]_{i=0, j=0}^{N-1,N-1}$. With these notations, IEAS encryption algorithm can be described as follows\footnote{To make the presentation more concise and complete, some notations in the original paper \cite{Zhang:ImageCrypt:SCSF07} are modified, and some details about the algorithm are also supplied or corrected under precondition that its security is not influenced.}.

\begin{itemize}
\item \textit{The secret key}: the number of iteration round $T$ and the initial condition $K_0\in (0,1)$ of the chaotic Logistic map
\begin{equation*}
\mathrm{f}(x)=\mu\cdot x\cdot(1-x).
%\label{eq:Logistic}
\end{equation*}

\item \textit{The initialization procedures}:
\par 1) run the Logistic map iteratively with fixed control parameter, $\mu=4$, $T+2$ times from $K_0$ to generate a chaotic sequence $\{x_{l}\}_{l=0}^{T+1}$. Then, a $32$-bit integer sequence $\{K_{l}\}_{l=0}^{T+1}$ is obtained from $\{x_{l}\}_{l=0}^{T+1}$ as
\begin{equation*}
K_{l}=\lfloor x_{l}\cdot(2^{32}-1)\rfloor.
\label{eq:Interger}
\end{equation*}

\par 2) permute and expand the $32$ binary bits of each element of $\{K_{l}\}_{l=0}^{T+1}$ by the look-up table shown in Table~\ref{tb:expansionpermute} and get a $50$-bit integer sequence $\{K_{l}^*\}_{l=0}^{T+1}$.

\begin{table}[!htb]
\centering \caption{The Expansion Permutation Table.}
\begin{tabular}{c |*{8}{c|}c}
\hline   32 & 1  & 2  & 3  & 4  & 5  & 4  & 5  & 6  & 7 \\
\hline   8  & 9  & 8  & 9  & 10 & 11 & 12 & 13 & 14 & 15\\
\hline   16 & 17 & 16 & 17 & 15 & 16 & 17 & 18 & 19 & 20\\
\hline   21 & 20 & 21 & 22 & 23 & 24 & 25 & 24 & 25 & 26\\
\hline   27 & 28 & 29 & 28 & 29 & 30 & 31 & 32 & 1  & 31\\
\hline
\end{tabular}
\label{tb:expansionpermute}
\end{table}

\par 3) generate $T$ permutation matrixes $\bm{P}_0\sim \bm{P}_{T-1}$, whose every entry represents its sole location in the permuted version of the object to be permuted, as follows. For $l=0\sim T-1$, $i=0\sim N-1$, $j=0\sim N-1$, do
\begin{equation}
\bm{P}_l(i, j)=\bm{C}_l\cdot
\left(
\begin{matrix}
i \\
j
\end{matrix}\right)\bmod N,
\label{eq:permutematrix}
\end{equation}
where $\bm{C}_l$ is the $t$-th element in the matrix set
\begin{equation}
\left\{\left(
\begin{matrix}
1 & a \\
b & ab+1
\end{matrix}\right),
\left(
\begin{matrix}
ab+1 & a \\
b    & 1
\end{matrix}\right),
\left(
\begin{matrix}
a    & 1 \\
ab-1 & b
\end{matrix}\right),
\left(
\begin{matrix}
a & ab-1 \\
1 & b
\end{matrix}\right)\right\},
\label{set:matrix}
\end{equation}
$t=\sum_{k=0}^1 K_{l, k}^*\cdot 2^k$, $a=\sum_{k=0}^7K_{l, k+2}^*\cdot 2^k$, $b=\sum_{k=0}^{7} K_{l, k+10}^*\cdot 2^k$, $K_{l}^*=\sum_{k=0}^{49}K_{l, k}^*\cdot 2^k$.

\par 4) produce $T+2$ mask matrixes, $\bm{V}_0\sim \bm{V}_{T+1}$, of size $N\times N$  with the following two steps.

\begin{itemize}
\item Utilize an OCML model to generate $T+2$ pseudo-random number matrixes of size $N\times N$, $\bm{W}_0\sim \bm{W}_{T+1}$. For $i=0\sim N-1$, $j=0\sim N-1$, do
\begin{equation*}
\bm{W}_l(i, j)=(1-\varepsilon)\cdot f(\bm{W}_l(i, j-1))+\varepsilon\cdot f(\bm{W}_l(i-1, j-1)),
\label{eq:ocml}
\end{equation*}
where $\varepsilon=0.875$, and the boundary conditions, $\bm{W}_l(-1, -1)\sim \bm{W}_l(-1, N-1)$ and $\bm{W}_l(0, -1)\sim \bm{W}_l(N-2, -1)$, are assigned by the chaotic states obtained by iterating the Logistic map $2N$ times from initial condition $(\sum_{k=0}^{31} K_{l, k+18}^*\cdot 2^k)/2^{32}$.

\item Discretize $\bm{W}_0\sim \bm{W}_{T+1}$ into $\bm{V}_0\sim \bm{V}_{T+1}$. For $i=0\sim N-1$, $j=0\sim N-1$, do
\begin{equation*}
\bm{V}_l(i, j)=\lfloor \bm{W}_l(i, j)\cdot 256\rfloor.
\label{eq:discrete}
\end{equation*}
\end{itemize}

\item \textit{The encryption procedure} is composed of $T$ rounds of five main steps. Let $\bm{L}_l$ and $\bm{R}_l$ denote the left half part and the right half part of intermediate data obtained in the $l$-th round of encryption, respectively. The schematic structure of IEAS is shown in Fig.~\ref{fig:originstructure}. Set $l=0$, $\bm{L}_l=\bm{L}$ and $\bm{R}_l=\bm{R}$, IEAS runs with the following five steps repeatedly.
\begin{itemize}
\item \textit{Step (a) mask substitution on the left half part in the current round}: let $l=l+1$, and do
\begin{equation}
\bm{R}_{l}(i,j) = \bm{V}_{l-1}(i,j)\oplus\bm{L}_{l-1}(i,j)
\label{eq:substitutionxor}
\end{equation}
for $i=0\sim N-1$, $j=0\sim N-1$.

\item \textit{Step (b) permutation on the right half part in the next round}: for $i=0\sim N-1$, $j=0\sim N-1$, do
\begin{equation*}
\widetilde{\bm{R}}_{l}(i,j) = \bm{R}_{l}(\bm{P}_{l-1}(i,j)).
\label{eq:substitution}
\end{equation*}
For simplicity, $\bm{R}_{l}(\bm{P}_{l-1})$ denotes this operation in remainder of this paper.

\item \textit{Step (c) substitution on the permuted right part}:
for $k=1\sim N^2-1$, do
\begin{equation}
\bm{L}_{l}( i, j) = \bm{R}_{l-1}(i, j)\oplus \mathrm{g}\left( \widetilde{\bm{R}}_{l}(i, j), \widetilde{\bm{R}}_{l}( i', j' )\right),
\label{eq:substitution}
\end{equation}
where $\bm{L}_{l}( 0, 0)=\bm{R}_{l-1}(0, 0)\oplus \widetilde{\bm{R}}_{l}(0, 0)$, $i=\lfloor k/N \rfloor$, $j=\bmod(k, N)$, $i'=\lfloor k-1/N \rfloor$, $j'=\bmod(k-1, N)$, and
\begin{equation}
\mathrm{g}(x, y)=(x+A*y) \bmod{256}.
\label{eq:difussion}
\end{equation}

\item \textit{Step (d) repetition}: repeat \textit{Step (a)} through \textit{Step (c)} $T-1$ times.

\item \textit{Step (e) final mask substitution}: generate the two half parts of cipher-image as follows: do
\begin{equation}
\bm{r}=\bm{V}_{T}\oplus \bm{L}_{T}
\label{eq:maskleft}
\end{equation}
and
\begin{equation}
\bm{l}=\bm{V}_{T+1}\oplus \bm{R}_{T},
\label{eq:maskright}
\end{equation}
where the exclusive or operation between two matrixes is calculated element-wise, the same hereinafter.
\end{itemize}

\item \textit{The decryption procedure} is similar to the encryption process except the following simple modifications: 1) the \textit{Step (e)} is performed first; 2) the different rounds of encryption are exerted in a reverse order.
\end{itemize}

\begin{figure}[!htb]
\centering
%\begin{minipage}[t]{\imagewidth}
\centering
\includegraphics[width=\imagewidth]{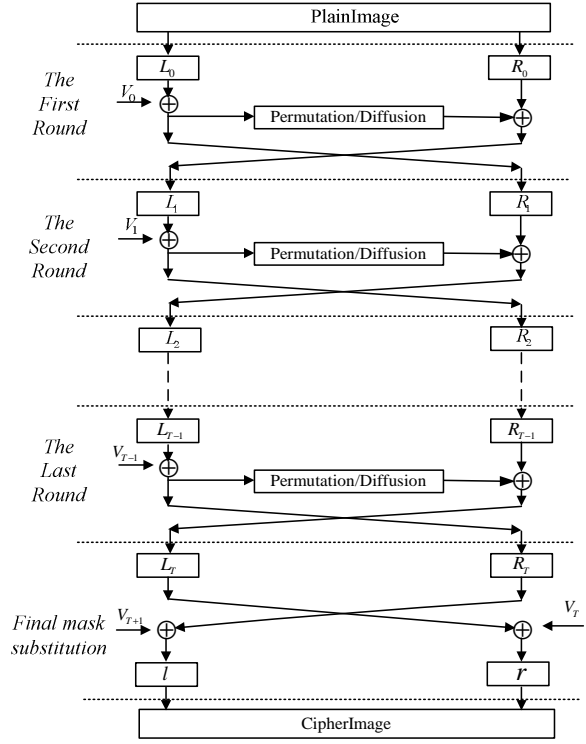}
%\end{minipage}
\caption{Schematic structure of IEAS.}
\label{fig:originstructure}
\end{figure}

\section{Differential cryptanalysis}
\label{sec:cryptanalysis}

Task of differential cryptanalysis is to get information of (equivalent) secret key of an encryption algorithm by observing how differences in an input can affect the resultant ones at the output. Generally, the difference is defined with respect to exclusive or (XOR) operation. In the following, some properties of IEAS are introduced first, which works as basis for differential attack on IEAS under different round numbers.

\subsection{Some properties of IEAS}

\begin{property}
Given two matrix entries $(i_1, j_1)$ and $(i_2 ,j_2)$ in $\bm{R}_{l}$, and let $(\tilde{i}_1, \tilde{j}_1)$ and $(\tilde{i}_2, \tilde{j}_2)$ denote the corresponding locations in $\widetilde{\bm{R}}_{l}$. If the two original entries satisfy
\begin{equation}
\gcd (\Delta,N) =1,
\label{eq:condition}
\end{equation}
one has
\begin{equation*}
\bm{C}_l=
\left(
\begin{matrix}
s  & u \\
v  & t
\end{matrix}\right),
\end{equation*}
where
\begin{align*}
\left(
\begin{matrix}
s \\
u \\
v \\
t
\end{matrix}
\right) & =
\left(
\begin{matrix}
\Delta^{-1}(\tilde{i}_1j_2-\tilde{i}_2j_1) \\
\Delta^{-1}(\tilde{i}_2i_1-\tilde{i}_1i_2) \\
\Delta^{-1}(\tilde{j}_1j_2-\tilde{j}_2j_1) \\
\Delta^{-1}(\tilde{j}_2i_1-\tilde{j}_1i_2)
\end{matrix}
\right)
\bmod N,
\end{align*}
$\Delta=i_1j_2-i_2j_1$, and $\Delta\cdot\Delta^{-1}=1 \bmod N$.
\label{Pro:calculate}
\end{property}
\begin{proof}
Obviously, $(i_1, j_1)$, $(i_2,j_2)$, $(\tilde{i}_1, \tilde{j}_1)$, and $(\tilde{i}_2, \tilde{j}_2)$ satisfy
\begin{align*}
\left(
\begin{matrix}
si_1   + uj_1  \\
si_2 + uj_2
\end{matrix}
\right)\bmod N & =
\left(
\begin{matrix}
\tilde{i}_1 \\
\tilde{i}_2
\end{matrix}
\right),
\end{align*}
which means
\begin{equation*}
\left(
\begin{matrix}
i_1 & j_1 \\
i_2 & j_2
\end{matrix}
\right) \cdot {s \choose u} = {\tilde{i}_1+K_1N \choose \tilde{i}_2+K_2N},
\end{equation*}
where $K_1, K_2\in \mathbb{Z}$.

Use the Gaussian elimination method, one can get
\begin{equation*}
\left(
\begin{matrix}
i_1 & j_1 \\
0 & i_1j_2-i_2j_1
\end{matrix}
\right) \cdot {s \choose u} = {\tilde{i}_1+K_1N \choose i_1\tilde{i}_2-i_2\tilde{i}_1+N(K_2i_1-K_1i_2)}.
\end{equation*}
According to the Cramer's rule, the above equation have one and only one solution when $\gcd (\Delta,N) =1$.
Thus,
\begin{align*}
s &=\Delta^{-1}(\tilde{i}_1j_2-\tilde{i}_2j_1) \bmod N, \\
u &=\Delta^{-1}(\tilde{i}_2i_1-\tilde{i}_1i_2) \bmod N.
\end{align*}
The value of $v, t$ can be obtained similarly.
\end{proof}

\begin{property}
If $2^n$ $(1\le n\le 7)$ divides variable $A$ in Eq.~(\ref{eq:difussion}), then the substitution function $\mathrm{g}(x, y)$ has no influence on the $n$ least significant bits of $x$, i.e., Eq.~(\ref{eq:substitution}) becomes
\begin{equation*}
\bm{L}_{l,k}(i,j)=\bm{R}_{l-1,k}(i,j)\oplus \widetilde{\bm{R}}_{l,k}(i,j),
\end{equation*}
where $k\in \{1, \cdots, n\}$, $\bm{L}_{l,k}$, $\bm{R}_{l-1,k}$ and $\widetilde{\bm{R}}_{l,k}$ are the $k$-th least significant bit plane of $\bm{L}_{l}$, $\bm{R}_{l-1,}$ and $\widetilde{\bm{R}}_{l}$, respectively.
\end{property}
\begin{proof}
This property can be easily proved by calculating
\begin{IEEEeqnarray*}{rCl}
\mathrm{g}(x, y) & = & x + A \cdot \sum_{i=0}^{7} y_i 2^{i}  \bmod {256}\\
                 & = & x + (A/2^n) \cdot \sum_{i=n}^{7}y_i 2^{i} \bmod{256}.
\end{IEEEeqnarray*}
\end{proof}

Let $\bm{L}'_{l}$, $\bm{R}'_{l}(\bm{P}_{l-1})$ and $\bm{R}'_{l-1}$ denote differential of two versions of $\bm{L}_{l}$, $\bm{R}_{l}(\bm{P}_{l-1})$ and $\bm{R}_{l-1}$, respectively. Observe the structure of intermediate data under different rounds shown in Fig.~\ref{fig:structure}, we can get the following property.

\begin{property}
If $2^n$ $(1\le n\le 7)$ divides variable $A$ in Eq.~(\ref{eq:difussion}), one has
\begin{equation*}
\left\{\,
\begin{IEEEeqnarraybox}[][c]{rCl}
\IEEEstrut
\bm{R}_l'      & = & \bm{L}_{l-1}', \\
\bm{L}'_{l, k} & = & \bm{R}'_{l-1, k}\oplus \bm{R}'_{l, k}(\bm{P}_{l-1}),
\IEEEstrut
\end{IEEEeqnarraybox}
\right.
\end{equation*}
where $k\in \{1, \cdots, n\}$, $\bm{L}'_{l, k}$, $\bm{R}'_{l-1, k}$ and  $\bm{R}'_{l, k}(\bm{P}_{l-1})$ are the $k$-th least significant bit plane of $\bm{L}'_{l}$, $\bm{R}'_{l-1}$ and $\bm{R}'_{l}(\bm{P}_{l-1})$, respectively.
\label{Pro:differentialstructure}
\end{property}
\begin{proof}
This property can be easily proved with mathematical induction on $l$ $(1\leq l\leq T)$. When $l=1$,
\begin{IEEEeqnarray*}{rCl}
\bm{R}_1'     & = & \bm{R}_1\oplus \bm{R}_1^*  \\
              & = & (\bm{L}_0\oplus \bm{V}_0)\oplus (\bm{L}_0^*\oplus \bm{V}_0) \\
              & = & \bm{L}_0', \\
\bm{L}'_{1,k} & = & (\bm{R}_{0,k} \oplus \bm{R}_{1,k}(\bm{P}_{0})) \oplus (\bm{R}^*_{0,k} \oplus \bm{R}^*_{1,k}(\bm{P}_{0}))\\
              & = & \bm{R}'_{0,k}\oplus \bm{R}'_{1,k}(\bm{P}_{0}).
\end{IEEEeqnarray*}
So, the property holds for $l=1$.
Assume that the property is true for $l=n$ $(n<T)$, we prove the case for $l=n+1$.
\begin{IEEEeqnarray*}{rCl}
\bm{R}_{n+1}'   & = & (\bm{L}_n\oplus \bm{V}_n)\oplus (\bm{L}_n^*\oplus \bm{V}_n) \\
                & = & \bm{L}_n', \\
\bm{L}'_{n+1,k} & = & (\bm{R}_{n,k} \oplus \bm{R}_{n+1,k}(\bm{P}_{n})) \oplus (\bm{R}^*_{n,k} \oplus \bm{R}^*_{n+1,k}(\bm{P}_{n}))\\
                & = & \bm{R}'_{n,k}\oplus \bm{R} '_{n+1,k}(\bm{P}_{n}).
\end{IEEEeqnarray*}
This completes the mathematical induction.
\end{proof}

\begin{figure}[!htb]
\centering
\begin{minipage}[t]{\imagewidth}
\centering
\includegraphics[width=\imagewidth]{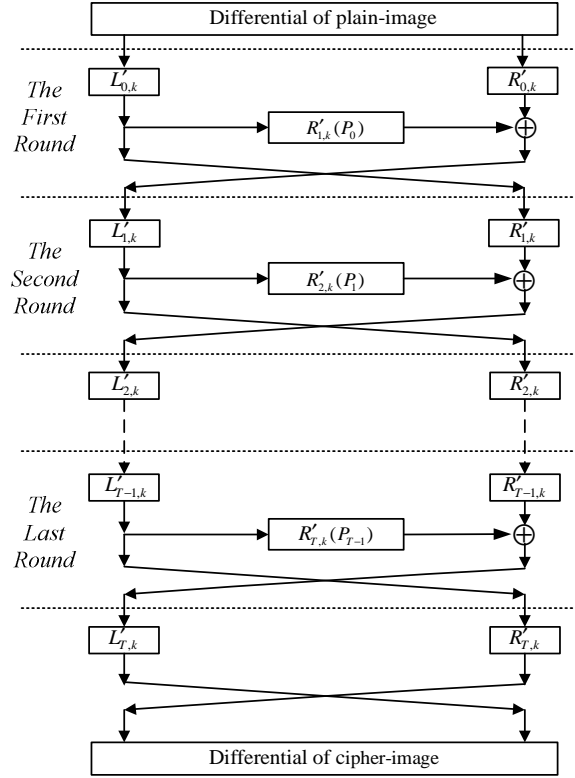}
\end{minipage}
\caption{Schematic structure of differential of intermediate data under different rounds.}
\label{fig:structure}
\end{figure}

\subsection{Breaking IEAS when the number of iteration round is equal to one}

Given two known/chosen plain-images, $[\bm L_{0},\bm R_{0}]$ and $[\bm L_{0}^*,\bm R_{0}^*]$, and the corresponding cipher-images, $[\bm{l},\bm{r}]$ and $[\bm{l}^*,\bm{r}^*]$, one has
\begin{equation*}
\left\{\,
\begin{IEEEeqnarraybox}[][c]{rCl}
\IEEEstrut
\bm{L}_0' &= &\bm{L}_0 \oplus \bm{L}_0^*,\\
\bm{R}_0' &=& \bm{R}_0 \oplus \bm{R}_0^*,
\IEEEstrut
\end{IEEEeqnarraybox}
\right.
\end{equation*}
and
\begin{equation*}
\left\{\,
\begin{IEEEeqnarraybox}[][c]{rCl}
\IEEEstrut
\bm{L}_1' &=& \bm{r} \oplus \bm{r}^*,\\
\bm{R}_1' &=& \bm{l} \oplus \bm{l}^*.
\IEEEstrut
\end{IEEEeqnarraybox}
\right.
%\label{eq:ciherdifferentialR1}
\end{equation*}

From Property~\ref{Pro:differentialstructure}, one can get
\begin{equation}
\left\{\,
\begin{IEEEeqnarraybox}[][c]{rCl}
\IEEEstrut
\bm{R}_1'      & = & \bm{L}_{0}', \\
\bm{R}'_{1, k}(\bm{P}_{0}) & = &\bm{L}'_{1, k}  \oplus \bm{R}'_{0, k},
\IEEEstrut
\end{IEEEeqnarraybox}
\right.
\label{eq:informationP0}
\end{equation}
where $k\in \{1, \cdots, n\}$, $2^n$ $(1\le n\le 7)$ divides the parameter $A$ in Eq.~(\ref{eq:difussion}). Comparing $\{\bm{R}_{1,k}'\}_{k=1}^n$ and $\{\bm{R}'_{1, k}(\bm{P}_{0})\}_{k=1}^n$, one may find two pairs of entries in $\bm{R}_1' $ and $\bm{R}'_{1}(\bm{P}_{0})$ whose locations satisfying condition~(\ref{eq:condition}). Then, the transformation matrix $\bm{C}_0$, generating the associated permutation matrix $\bm{P}_{0}$, can be solved according to Property~\ref{Pro:calculate}. In case the search of the required entries failed, one can resort to observing more known plain-images or constructing special differential images from more chosen plain-images \cite{Lcq:Optimal:SP11}. As shown in \cite{Lcq:Optimal:SP11}, $\lceil 2\log_2(N)\rceil$ chosen binary plain-images are enough to break any position permutation-only encryption algorithm exerting on binary plain-images of size $N\times N$. Due to similarity, we do not mention the problem about determining permutation matrix with more known/chosen plain-images in the remainder of this paper. Once $\bm{C}_0$ is determined, the associated matrix $\bm{P}_{0}$ can be obtained from it easily.

Referring to Eq.~(\ref{eq:substitutionxor}) and Eq.~(\ref{eq:maskright}), one can get
\begin{equation}
\bm{V}_2 \oplus \bm{V}_0=\bm{L}_0 \oplus \bm{l}.
\label{eq:condition1R1}
\end{equation}
Combining Eq.~(\ref{eq:maskleft}) and Eq.~(\ref{eq:substitution}) yields
\begin{equation}
\bm{r}_k =\bm{V}_{1,k} \oplus \bm{R}_{0,k} \oplus \bm{R}_{1,k}(\bm{P}_{0}), \label{eq:expandR1}
\end{equation}
where $\bm{r}_k$ is the $k$-th least significant bit plane of $\bm{r}$. As the exclusive or operation is linear with respect to position permutation, one can get
\begin{equation*}
\bm{R}_{1,k}(\bm{P}_{0}) = \bm{V}_{0,k}(\bm{P}_{0}) \oplus \bm{L}_{0,k}(\bm{P}_0)
%\label{eq:expandR1P0}
\end{equation*}
from Eq.~(\ref{eq:substitutionxor}).
Substitute $\bm{R}_{1,k}(\bm{P}_{0})$ obtained in the above equation into Eq.~(\ref{eq:expandR1}), one can further get
\begin{equation}
\bm{V}_{1,k} \oplus \bm{V}_{0,k}(\bm{P}_{0})=\bm{r}_k \oplus \bm{R}_{0,k}  \oplus \bm{L}_{0,k}(\bm{P}_{0}).
\label{eq:equivalentkeyR1}
\end{equation}

Since neither of Eq.~(\ref{eq:condition1R1}) and Eq.~(\ref{eq:equivalentkeyR1}) has any special requirement on the pair of plain-image and corresponding cipher-image, some parts of any other cipher-image encrypted with the same secret key, $[\bm{l}^{\star}, \bm{r}^{\star}]$, can be recovered by calculating \begin{equation*}
\left\{\,
\begin{IEEEeqnarraybox}[][c]{rCl}
\IEEEstrut
\bm{L}^{\star}_0       &=& \bm{l}^{\star}  \oplus \bm{M}_{1}  , \\
\bm{R}^{\star}_{0,k}   &=& \bm{r}^{\star}_k \oplus \bm{L}^{\star}_{0,k}(\bm{P}_{0})
\oplus \bm{N}_{1,k}  ,
\IEEEstrut
\end{IEEEeqnarraybox}
\right.
\end{equation*}
where
\begin{equation*}
\left\{\,
\begin{IEEEeqnarraybox}[][c]{rCl}
\IEEEstrut
\bm{M}_{1}      &=& \bm{L}_0 \oplus \bm{l}, \\
\bm{N}_{1,k}    &=&\bm{r}_k \oplus \bm{R}_{0,k}  \oplus \bm{L}_{0,k}(\bm{P}_{0}).
\IEEEstrut
\end{IEEEeqnarraybox}
\right.
\end{equation*}
Now, one can see that $\bm{M}_{1}$, $\{\bm{N}_{1,k}\}_{k=1}^n$ and $\bm{P}_{0}$ can work together to recover the whole left half part of $\bm{l}^{\star}$, and the $n$ least significant bit planes of the right part of $\bm{r}^{\star}$, $\{\bm{R}^{\star}_{0,k}\}_{k=1}^n$.

To verify the above analysis, some experiments on some plain-images of size $256\times 512$ are made. With secret key $K_0=1234567/(2^{32}-1)$, $T=1$ and parameter $A=64$, two known plain-images, cropped version of two standard images, ``Lenna" and ``Baboon", and the corresponding cipher-images are shown in Figs.~\ref{figure:differentialattackR1}a), b), d), e), respectively. The obtained information about the secret key is used to decrypt another cipher-image shown in Fig.~\ref{figure:differentialattackR1}c), and result is shown in Fig.~\ref{figure:differentialattackR1}f). The whole left half part and the 6 least significant bit planes of the right half part of the recovered image shown in Fig.~\ref{figure:differentialattackR1}f) are identical with counterpart of the corresponding plain-image, which agree with the expected result well.

\begin{figure}[!htb]
\centering
\begin{minipage}{\figwidth}
\includegraphics[width=\textwidth]{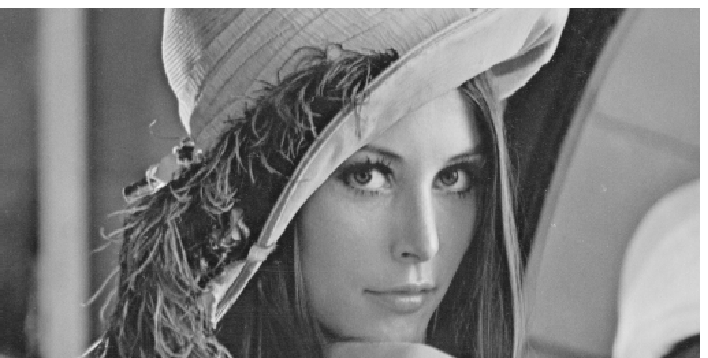}
a)
\end{minipage}
\begin{minipage}{\figwidth}
\includegraphics[width=\textwidth]{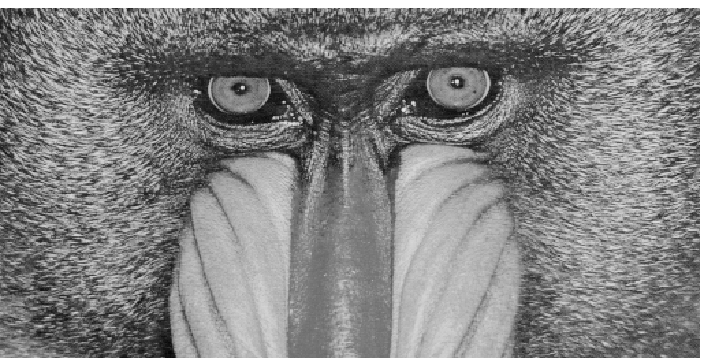}
b)
\end{minipage}
\begin{minipage}{\figwidth}
\includegraphics[width=\textwidth]{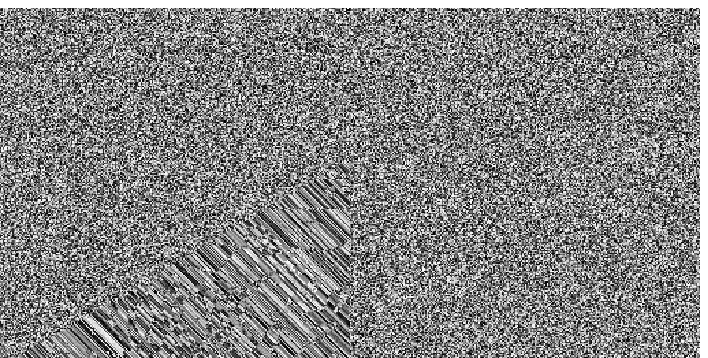}
c)
\end{minipage}
\\
\begin{minipage}{\figwidth}
\includegraphics[width=\textwidth]{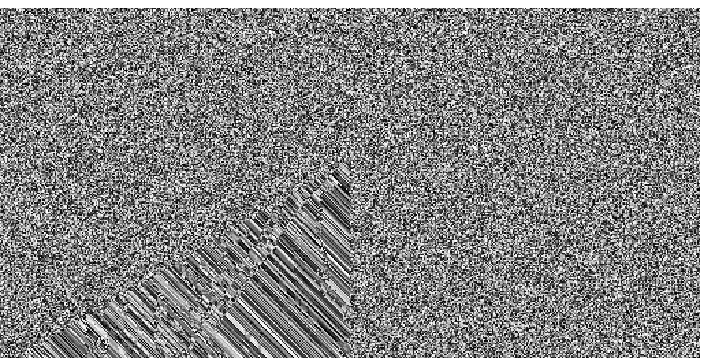}
d)
\end{minipage}
\begin{minipage}{\figwidth}
\includegraphics[width=\textwidth]{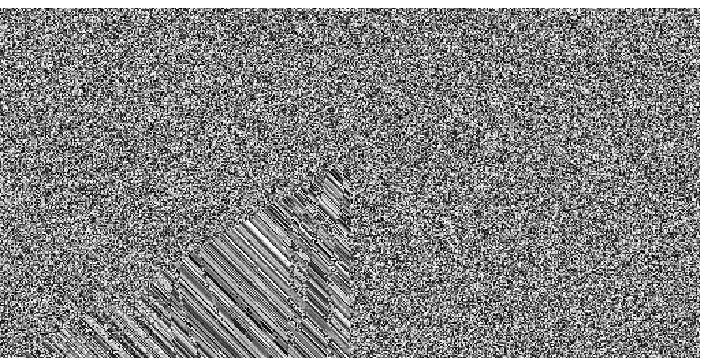}
e)
\end{minipage}
\begin{minipage}{\figwidth}
\includegraphics[width=\textwidth]{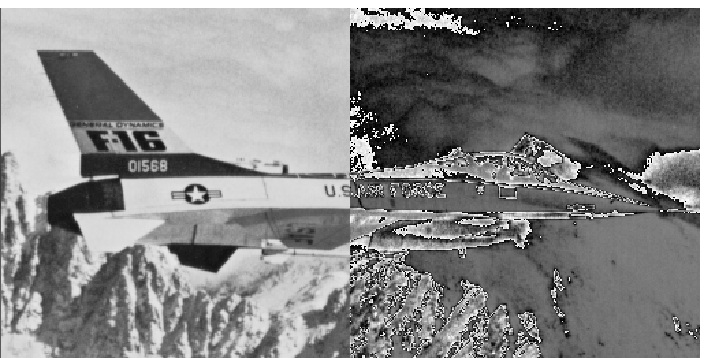}
f)
\end{minipage}
\caption{Differential attack on IEAS when $T=1$:
a) the first known plain-image;
b) the second known plain-image;
c) cipher-image of plain-image ``Airplane";
d) cipher-image of Fig.~\ref{figure:differentialattackR1}a);
e) cipher-image of Fig.~\ref{figure:differentialattackR1}b);
f) the recovered plain-image of Fig.~\ref{figure:differentialattackR1}c).}
\label{figure:differentialattackR1}
\end{figure}

\subsection{Breaking IEAS when the number of iteration round is equal to two}

In this case, the differential of ciphertext is
\begin{equation}
\left\{\,
\begin{IEEEeqnarraybox}[][c]{rCl}
\IEEEstrut
\bm{L}_2' &=& \bm{r} \oplus \bm{r}^*,\\
\bm{R}_2' &=& \bm{l} \oplus \bm{l}^*.
\IEEEstrut
\end{IEEEeqnarraybox}
\right.
\label{eq:ciherdifferentialR2}
\end{equation}
From Property 3, one has
\begin{equation*}
\left\{\,
\begin{IEEEeqnarraybox}[][c]{rCl}
\IEEEstrut
\bm{R}_1'                   & = & \bm{L}_{0}', \\
\bm{R}'_{1, k}(\bm{P}_{0})  & = & \bm{R}'_{2,k} \oplus \bm{R}'_{0,k},
\IEEEstrut
\end{IEEEeqnarraybox}
\right.
\end{equation*}
where $\bm{R}'_{2,k}$ is the $k$-th least significant bit plane of $\bm{R}'_{2}$. Then, the transformation matrix $\bm{C}_{0}$, generating the associated permutation matrix $\bm{P}_{0}$, can be recovered by comparing $\{\bm{R}_{1,k}'\}_{k=1}^n$ and $\{\bm{R}'_{1, k}(\bm{P}_{0})\}_{k=1}^n$.

Still from Property~\ref{Pro:differentialstructure}, one has
\begin{IEEEeqnarray*}{rCl}
\bm{R}'_{2, k}(\bm{P}_{1})  &=& \bm{L}'_{2, k}  \oplus \bm{R}'_{1, k} \\
                             &=& \bm{L}'_{2,k} \oplus \bm{L}'_{0,k}.
\end{IEEEeqnarray*}
Similarly, one can get the transform matrix $\bm{C}_{1}$, then permutation matrix $\bm{P}_{1}$, by comparing $\{\bm{R}_{2,k}'\}_{k=1}^n$ and $\{\bm{R}'_{2, k}(\bm{P}_{1})\}_{k=1}^n$.

Referring to Eq.~(\ref{eq:maskright}), one has
\begin{IEEEeqnarray}{rCl}
\bm{l}_k  &=& \bm{V}_{3,k} \oplus \bm{R}_{2,k} \label{eq:differentialR2},
\end{IEEEeqnarray}
where $\bm{l}_k$ is the $k$-th least significant bit plane of $\bm{l}$.
Combining Eq.~(\ref{eq:substitutionxor}), Eq.~(\ref{eq:substitution}) and Eq.~(\ref{eq:maskleft}) yields
\begin{IEEEeqnarray}{rCl}
\bm{r}_k  &=& \bm{V}_{2,k} \oplus \bm{L}_{2,k}  \nonumber\\
          &=& \bm{V}_{2,k} \oplus \bm{R}_{1,k} \oplus \bm{R}_{2,k}(\bm{P}_{1}) \nonumber\\
          &=& \bm{V}_{2,k} \oplus \bm{V}_{0,k} \oplus \bm{L}_{0,k} \oplus \bm{R}_{2,k}(\bm{P}_{1}).\nonumber
        %  \label{eq:differentialR21}
\end{IEEEeqnarray}
Substitute $\bm{R}_{2,k}$ obtained in Eq.~(\ref{eq:differentialR2}) into the above equation, one has
\begin{equation}
\bm{V}_{3,k}(\bm{P}_1) \oplus \bm{V}_{2,k} \oplus \bm{V}_{0,k} = \bm{l}_{k}(\bm{P}_1) \oplus \bm{r}_{k}
\oplus \bm{L}_{0,k}.
\label{eq:resultR21}
\end{equation}
Combine Eq.~(\ref{eq:substitutionxor}) and Property~2, one can get
\begin{IEEEeqnarray}{rCl}
\bm{R}_{2,k}  &=& \bm{V}_{1,k} \oplus \bm{L}_{1,k} \nonumber \\
          &=& \bm{V}_{1,k} \oplus \bm{R}_{0,k} \oplus \bm{R}_{1,k}(\bm{P}_{0}) \nonumber \\
          &=& \bm{V}_{1,k} \oplus \bm{R}_{0,k} \oplus \bm{V}_{0,k}(\bm{P}_{0}) \oplus \bm{L}_{0,k}(\bm{P}_{0}),
          \label{eq:resulttoR22}
\end{IEEEeqnarray}
then Eq.~(\ref{eq:differentialR2}) becomes
\begin{equation}
\bm{V}_{3,k} \oplus \bm{V}_{1,k} \oplus \bm{V}_{0,k}(\bm{P}_0) = \bm{l}_{k} \oplus \bm{L}_{0,k}(\bm{P}_0)  \oplus \bm{R}_{0,k}.
\label{eq:resultR22}
\end{equation}

Since both Eq.~(\ref{eq:resultR21}) and Eq.~(\ref{eq:resultR22}) always hold for any pair of plain-image and cipher-image encrypted with
the same secret key, it can be easily verified that
\begin{equation*}
\left\{
\begin{IEEEeqnarraybox}[][c]{rCl}
\IEEEstrut
\bm{L}^{\star}_{0,k} &=& \bm{l}^{\star}_k(\bm{P}_{1}) \oplus \bm{r}^{\star}_k \oplus  \bm{M}_{2,k} , \\
\bm{R}^{\star}_{0,k} &=& \bm{l}^{\star}_k \oplus \bm{L}^{\star}_{0,k}(\bm{P}_{0}) \oplus \bm{N}_{2,k},
\IEEEstrut
\end{IEEEeqnarraybox}
\right.
%\label{eq:dectyptR2}
\end{equation*}
where
\begin{equation*}
\left\{\,
\begin{IEEEeqnarraybox}[][c]{rCl}
\IEEEstrut
\bm{M}_{2,k}  &=&  \bm{l}_k(\bm{P}_{1})\oplus \bm{r}_k \oplus \bm{L}_{0,k} , \\
\bm{N}_{2,k}  &=& \bm{l}_k \oplus \bm{L}_{0,k}(\bm{P}_{0})  \oplus \bm{R}_{0,k}.
\IEEEstrut
\end{IEEEeqnarraybox}
\right.
\end{equation*}
The above equations mean that $\bm{M}_{2,k}$, $\bm{N}_{2,k}$ and $\{\bm{P}_{l}\}_{l=0}^1$ can work together to recover the $k$-th least significant bit plane of any other cipher-image encrypted with the same secret key, $[\bm{L}^{\star}_{0,k}, \bm{R}^{\star}_{0,k}]$, for $k=1\sim n$.

To verify the above analysis, some experiments are made. With secret key $K_0=1234567/(2^{32}-1)$, $T=2$ and parameter $A=64$, encryption results of the two known-images shown in Figs.~\ref{figure:differentialattackR1}a) and b) are shown in Figs.~\ref{figure:differentialattackR2}a) and b), respectively. The information about equivalent secret key obtained from the two pairs of plain-images and cipher-images is used decrypt another cipher-image shown in Fig.~\ref{figure:differentialattackR2}c) and the result is shown in Fig.~\ref{figure:differentialattackR2}d). It is counted that the $6$ least significant bit planes of the image shown in Fig.~\ref{figure:differentialattackR2}d) are identical with the counterparts of the corresponding plain-image.

\begin{figure}[!htb]
\centering
\begin{minipage}{\figwidth}
\includegraphics[width=\textwidth]{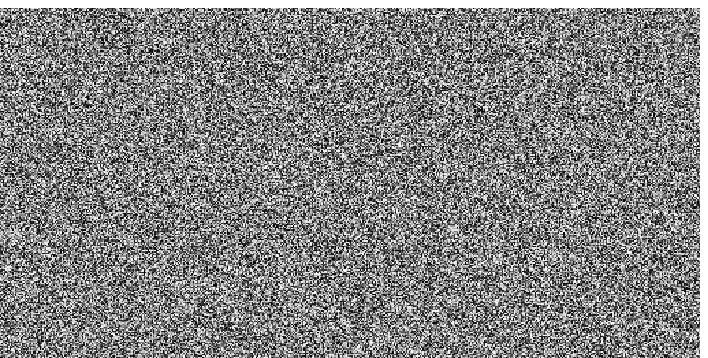}
a)
\end{minipage}
\begin{minipage}{\figwidth}
\includegraphics[width=\textwidth]{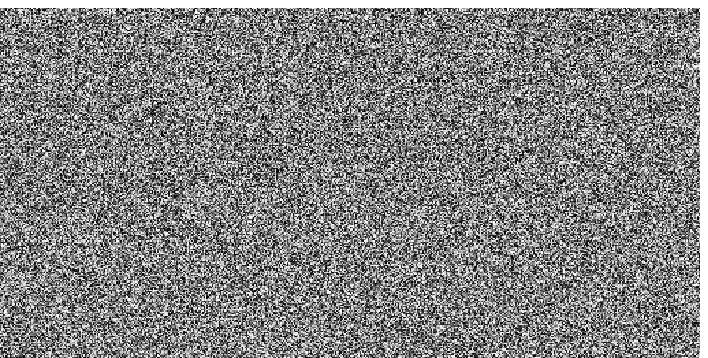}
b)
\end{minipage}\\
\begin{minipage}{\figwidth}
\includegraphics[width=\textwidth]{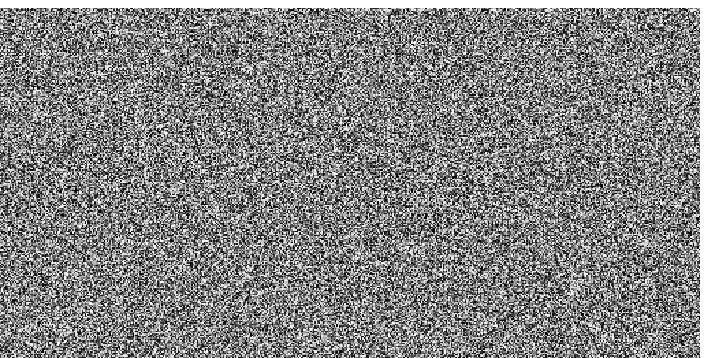}
c)
\end{minipage}
\begin{minipage}{\figwidth}
\includegraphics[width=\textwidth]{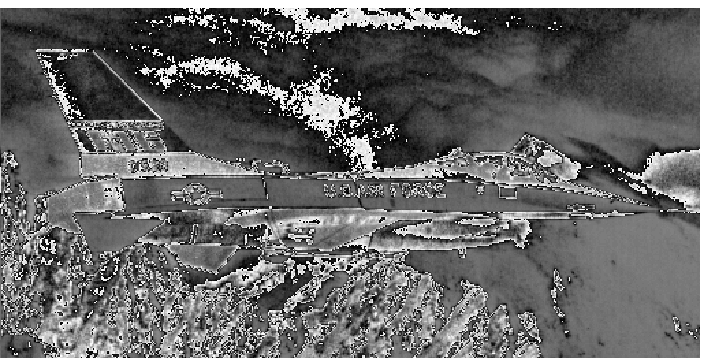}
d)
\end{minipage}
\caption{Differential attack on IEAS when $T=2$:
a) cipher-image of Fig.~\ref{figure:differentialattackR1}a);
b) cipher-image of Fig.~\ref{figure:differentialattackR1}b);
c) cipher-image of plain-image ``Airplane";
d) the recovered plain-image of Fig.~\ref{figure:differentialattackR2}c).}
\label{figure:differentialattackR2}
\end{figure}

\subsection{Breaking IEAS when the round number is equal to three}

In this sub-section we discuss how to break the version of IEAS of three rounds with no less than three chosen plain-images.

In this case, the differential of ciphertext is
\begin{equation*}
\left\{\,
\begin{IEEEeqnarraybox}[][c]{rCl}
\IEEEstrut
\bm{L}_3' &= & \bm{r} \oplus \bm{r}^*,\\
\bm{R}_3' &= &\bm{l} \oplus \bm{l}^*.
\IEEEstrut
\end{IEEEeqnarraybox}
\right.
%\label{eq:ciherdifferentialR2}
\end{equation*}
According to Property~\ref{Pro:differentialstructure}, one has
\begin{IEEEeqnarray}{rCl}
\bm{L}_{3,k}'  &=& \bm{R}'_{3, k}(\bm{P}_{2}) \oplus \bm{R}'_{2,k}                                   \nonumber\\
               &=& \bm{R}'_{3, k}(\bm{P}_{2}) \oplus \bm{R}'_{0,k} \oplus \bm{R}'_{1, k}(\bm{P}_{0}) \nonumber\\
               &=& \bm{R}'_{3, k}(\bm{P}_{2}) \oplus \bm{R}'_{0,k} \oplus \bm{L}'_{0, k}(\bm{P}_{0}).
                \label{eq:dectyptP2R3}
\end{IEEEeqnarray}
If $\bm{L}'_{0,k}$ is chosen as a binary matrix of fixed value, which makes
\begin{equation}
\bm{L}'_{0, k}(\bm{P}_{0})\equiv\bm{L}'_{0, k},
\label{eq:cancelpermute}
\end{equation}
$\bm{R}'_{3, k}(\bm{P}_{2})$ can be obtained from Eq.~(\ref{eq:dectyptP2R3}). With the same method
mentioned above, $\bm{P}_{2}$ can be recovered by comparing $\{\bm{R}_{3,k}'\}_{k=1}^n$ and $\{\bm{R}'_{3, k}(\bm{P}_{2})\}_{k=1}^n$.

As for the differential image satisfying Eq.~(\ref{eq:cancelpermute}), one also has
\begin{IEEEeqnarray}{rCl}
\bm{R}'_{2,k}   &=&  \bm{R}'_{3,k}(\bm{P}_{2}) \oplus \bm{L}_{3,k}' \nonumber \\
                &=&  \bm{R}'_{0,k} \oplus \bm{L}'_{0, k}(\bm{P}_{0}) \nonumber \\
                &=&  \bm{R}'_{0,k} \oplus \bm{L}'_{0, k}.     \label{eq:R2kR3}
\end{IEEEeqnarray}
Note that
\begin{IEEEeqnarray}{rCl}
\bm{R}_{3,k}'   &=& \bm{L}'_{2,k}                                   \nonumber \\
                &=& \bm{R}'_{1,k} \oplus \bm{R}'_{2, k}(\bm{P}_{1}). \nonumber
              %  \label{eq:R0P1R3}
\end{IEEEeqnarray}
Substitute Eq.~(\ref{eq:R2kR3}) into the above equation, one can get
\begin{IEEEeqnarray*}{rCl}
\bm{R}_{3,k}'  &=& \bm{R}'_{1,k} \oplus \bm{R}'_{0,k}(\bm{P}_{1}) \oplus \bm{L}'_{0, k}(\bm{P}_{1}) \\
               &=& \bm{L}'_{0,k} \oplus \bm{R}'_{0,k}(\bm{P}_{1}) \oplus \bm{L}'_{0, k}\\
               &=&\bm{R}'_{0,k}(\bm{P}_{1}) .
\end{IEEEeqnarray*}
Then, $\bm{R}'_{0, k}(\bm{P}_{1})$ can be obtained from the above equation, and $\bm{P}_{1}$ can be recovered by comparing $\{\bm{R}_{0,k}'\}_{k=1}^n$ and $\{\bm{R}'_{0, k}(\bm{P}_{1})\}_{k=1}^n$.

Once $\bm{P}_{2}$ is recovered, $\bm{L}'_{0, k}(\bm{P}_{0})$ can be obtained from Eq.~(\ref{eq:dectyptP2R3}). Then, $\bm{P}_{0}$ can be recovered by comparing $\{\bm{L}_{0,k}'\}_{k=1}^n$ and $\{\bm{L}'_{0, k}(\bm{P}_{0})\}_{k=1}^n$.
As mentioned before, one and even more pairs of plaintext and the corresponding ciphertexts are required to
find two pairs of entries in $\bm{L}_{0,k}'$ and $\bm{L}'_{0, k}(\bm{P}_{0})$ whose locations satisfying condition~(\ref{eq:condition}).

From Eq.~(\ref{eq:maskright}), one has
\begin{equation}
\bm{l}_k  = \bm{V}_{4,k} \oplus \bm{R}_{3,k},\label{eq:expressLKR3}
\end{equation}
where $\bm{R}_{3,k}$ is the $k$-th least significant bit plane of $\bm{R}_{3}$.
Combine Eq.~(\ref{eq:substitutionxor}), Eq.~(\ref{eq:substitution}) and Eq.~(\ref{eq:maskleft}), one can get
\begin{IEEEeqnarray}{rCl}
\bm{r}_k  &=& \bm{V}_{3,k} \oplus \bm{L}_{3,k} \nonumber  \\
          &=& \bm{V}_{3,k} \oplus \bm{R}_{2,k} \oplus \bm{R}_{3,k}(\bm{P}_{2}) \nonumber \\
          &=& \bm{V}_{3,k} \oplus \bm{V}_{1,k} \oplus \bm{L}_{1,k} \oplus \bm{R}_{3,k}(\bm{P}_{2}) \nonumber \\
          &=& \bm{V}_{3,k} \oplus \bm{V}_{1,k} \oplus \bm{R}_{0,k} \oplus \bm{R}_{1,k}(\bm{P}_{0})\oplus \bm{R}_{3,k}(\bm{P}_{2}) \nonumber \\
          &=& \bm{V}_{3,k} \oplus \bm{V}_{1,k} \oplus \bm{R}_{0,k} \oplus \bm{V}_{0,k}(\bm{P}_{0})\oplus \bm{L}_{0,k}(\bm{P}_{0})\oplus \bm{R}_{3,k}(\bm{P}_{2}).\nonumber
      %     \label{eq:expressRKR3}
\end{IEEEeqnarray}
Substitute $\bm{R}_{3,k}$ obtained in Eq.~(\ref{eq:expressLKR3}) into the above equation and
get
\begin{equation}
 \bm{V}_{4,k}(\bm{P}_{2}) \oplus  \bm{V}_{3,k} \oplus \bm{V}_{1,k} \oplus\bm{V}_{0,k}(\bm{P}_{0})  =  \bm{l}_{k}(\bm{P}_{2}) \oplus\bm{r}_k \oplus \bm{R}_{0,k} \oplus  \bm{L}_{0,k}(\bm{P}_{0}). \label{eq:equivalentkey1R3}
\end{equation}
Referring to Eq.~(\ref{eq:substitutionxor}) and Eq.~(\ref{eq:resulttoR22}), one has
\begin{IEEEeqnarray}{rCl}
\bm{R}_{3,k} &=& \bm{V}_{2,k} \oplus \bm{L}_{2,k} \nonumber \\
             &=& \bm{V}_{2,k} \oplus \bm{R}_{1,k} \oplus \bm{R}_{2}(\bm{P}_{1}) \nonumber \\
             &=& \bm{V}_{2,k} \oplus \bm{V}_{0,k} \oplus \bm{L}_{0,k} \oplus \bm{V}_{1,k}(\bm{P}_1) \oplus \bm{R}_{0,k}(\bm{P}_{1})\oplus \bm{V}_{0,k}(\bm{P}_0\bm{P}_{1})\oplus \bm{L}_{0,k}(\bm{P}_0\bm{P}_{1}),
 \label{eq:expressR2KR3}
\end{IEEEeqnarray}
then Eq.~(\ref{eq:expressLKR3}) can be rewritten as
\begin{equation*}
\bm{l}_k   = \bm{V}_{4,k}\oplus \bm{V}_{2,k} \oplus \bm{V}_{0,k} \oplus \bm{L}_{0,k} \oplus \bm{V}_{1,k}(\bm{P}_1) \oplus \bm{R}_{0,k}(\bm{P}_{1})\oplus \bm{V}_{0,k}(\bm{P}_0\bm{P}_{1})\oplus \bm{L}_{0,k}(\bm{P}_0\bm{P}_{1}).
% \label{eq:expresslk}
\end{equation*}
Substitute $\bm{L}_{0,k}(\bm{P}_{0})$ obtained in Eq.~(\ref{eq:equivalentkey1R3}) into the above equation, and get
\begin{equation}
\bm{V}_{4,k}(\bm{P}_{2}\bm{P}_{1}) \oplus \bm{V}_{3,k}(\bm{P}_1) \oplus \bm{V}_{4,k} \oplus \bm{V}_{2,k} \oplus \bm{V}_{0,k}  = \bm{l}_{k}(\bm{P}_{2}\bm{P}_{1}) \oplus \bm{r}_{k}(\bm{P}_{1}) \oplus \bm{l}_k \oplus  \bm{L}_{0,k}.\label{eq:equivalentkey2R3}
\end{equation}
Since both Eq.~(\ref{eq:equivalentkey1R3}) and Eq.~(\ref{eq:equivalentkey2R3}) hold for any pair of plain-image and its corresponding cipher-image, it can be easily verified that
\begin{equation*}
\left\{\,
\begin{IEEEeqnarraybox}[][c]{rCl}
\IEEEstrut
\bm{L}^{\star}_{0,k} &=& \bm{l}^{\star}_k(\bm{P_2}\bm{P_1}) \oplus  \bm{r}^{\star}_k(\bm{P_1}) \oplus \bm{l}^{\star}_k \oplus \bm{M}_{3,k}, \\
\bm{R}^{\star}_{0,k} &=&\bm{l}^{\star}_k(\bm{P_2}) \oplus  \bm{r}^{\star}_k   \oplus \bm{L}^{\star}_{0,k}(\bm{P}_{0}) \oplus \bm{N}_{3,k},
\IEEEstrut
\end{IEEEeqnarraybox}
\right.
\end{equation*}
where
\begin{equation*}
\left\{\,
\begin{IEEEeqnarraybox}[][c]{rCl}
\IEEEstrut
\bm{M}_{3,k} &=&\bm{l}_{k}(\bm{P}_{2}\bm{P}_{1}) \oplus \bm{r}_{k}(\bm{P}_{1}) \oplus \bm{l}_k \oplus
                \bm{L}_{0,k}, \\
\bm{N}_{3,k} &=& \bm{l}_{k}(\bm{P}_{2}) \oplus\bm{r}_k \oplus \bm{R}_{0,k} \oplus  \bm{L}_{0,k}(\bm{P}_{0}).
\IEEEstrut
\end{IEEEeqnarraybox}
\right.
\end{equation*}
The above equations mean that $\bm{M}_{3,k}$, $\bm{N}_{3,k}$ and $\{\bm{P}_{l}\}_{l=0}^2$ can work together to recover the $k$-th least significant bit plane of any other cipher-image encrypted with the same secret key, $[\bm{L}^{\star}_{0,k}, \bm{R}^{\star}_{0,k}]$, for $k=1\sim n$.

To verify the above analysis, some similar experiments are made with $K_0=1234567/(2^{32}-1)$, $T=3$ and $A=64$. First, a chosen plain-image is composed
by combining the left half part of Fig.~\ref{figure:differentialattackR1}a) and the right half part of Fig.~\ref{figure:differentialattackR1}b), which makes
the special differential files satisfying Eq.~(\ref{eq:cancelpermute}) can be generated. Then, the three plain-images shown in Figs.~\ref{figure:differentialattackR1}a), b), Fig.~\ref{figure:differentialattackR3}a) and a plain-image ``Airplane" are encrypted with the same secret key, and the results are shown in Figs.~\ref{figure:differentialattackR3}b), c), d), e), respectively. With the three pairs of plain-images and cipher-images, some information about the secret key is obtained to decrypt the cipher-image shown in Fig.~\ref{figure:differentialattackR3}e) and the result is shown in Fig.~\ref{figure:differentialattackR3}f). It is counted that the $6$ least significant bit planes of the image shown in Fig.~\ref{figure:differentialattackR3}f) are identical with the counterparts of the corresponding plain-image also.

\begin{figure}[!htb]
\centering
\begin{minipage}{\figwidth}
\includegraphics[width=\textwidth]{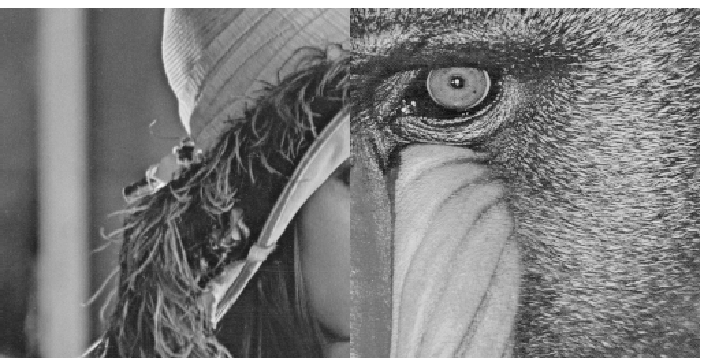}
a)
\end{minipage}
\begin{minipage}{\figwidth}
\includegraphics[width=\textwidth]{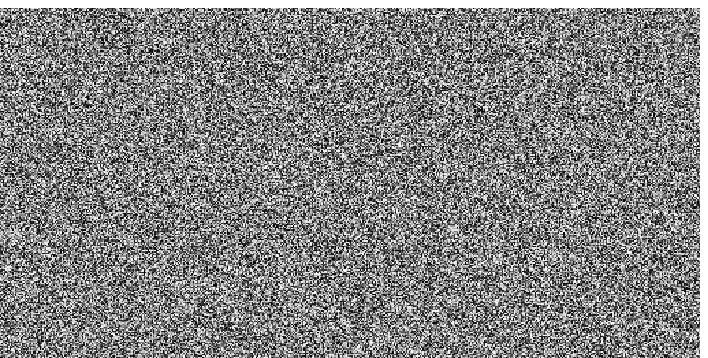}
b)
\end{minipage}\\
\begin{minipage}{\figwidth}
\includegraphics[width=\textwidth]{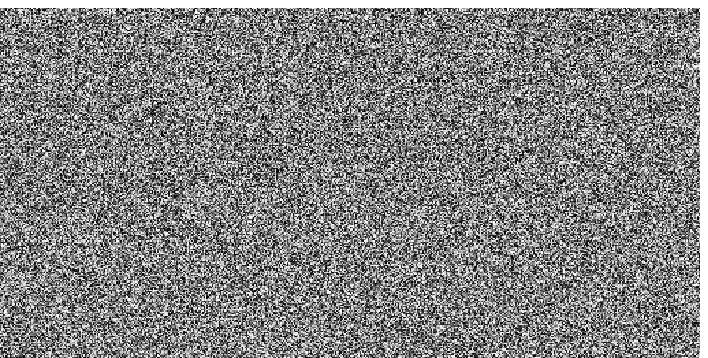}
c)
\end{minipage}
\begin{minipage}{\figwidth}
\includegraphics[width=\textwidth]{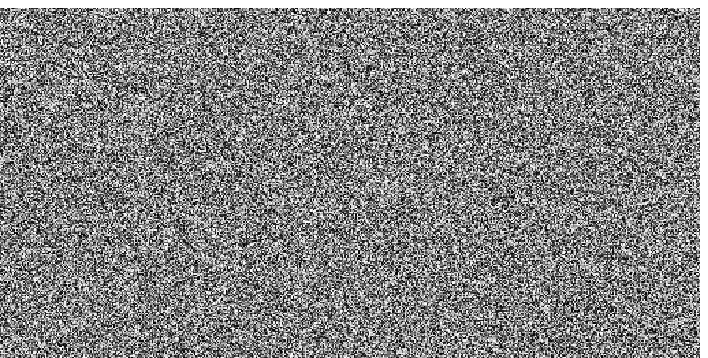}
d)
\end{minipage}\\
\begin{minipage}{\figwidth}
\includegraphics[width=\textwidth]{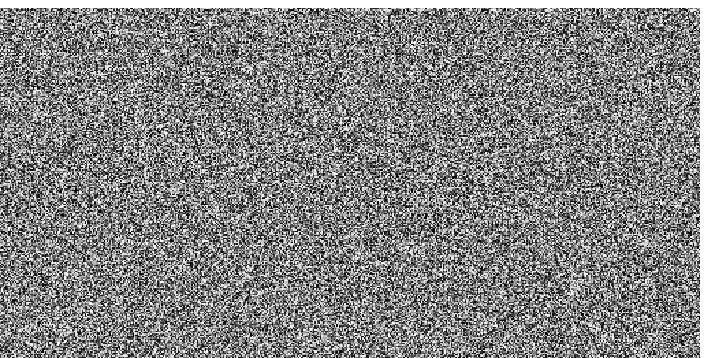}
e)
\end{minipage}
\begin{minipage}{\figwidth}
\includegraphics[width=\textwidth]{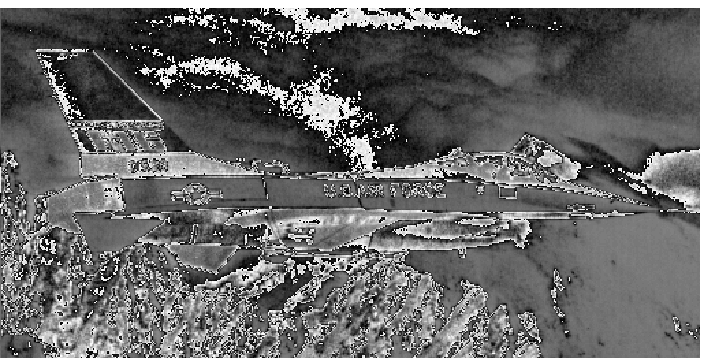}
f)
\end{minipage}
\caption{Differential attack on IEAS when $T=3$:
a) the constructed plain-image;
b) cipher-image of Fig.~\ref{figure:differentialattackR1}a);
c) cipher-image of Fig.~\ref{figure:differentialattackR1}b);
d) cipher-image of Fig.~\ref{figure:differentialattackR3}a);
e) cipher-image of the plain-image ``Airplane";
f) the recovered plain-image of Fig.~\ref{figure:differentialattackR3}e).}
\label{figure:differentialattackR3}
\end{figure}

\subsection{Breaking IEAS of higher rounds $(T\geq 4)$}

It is not hard to notice that there are some general approaches to breaking IEAS of different rounds. Here, we take breaking
the version of IEAS under four rounds as an example to illustrate how to implement differential attack on IEAS in a general way.

\begin{itemize}
\item \textit{Step 1) breaking position permutation}:

According to Property~\ref{Pro:differentialstructure}, one has
\begin{IEEEeqnarray}{rCl}
\bm{L}_{4,k}'   &=& \bm{R}'_{4, k}(\bm{P}_{3}) \oplus \bm{R}'_{3,k} \nonumber\\
                &=& \bm{L}'_{3, k}(\bm{P}_{3}) \oplus \bm{L}'_{2, k} \nonumber\\
                &=& \bm{R}'_{2, k}(\bm{P}_{3}) \oplus \bm{L}'_{2, k}(\bm{P}_2 \bm{P}_{3})  \oplus \bm{R}'_{2, k}(\bm{P}_{1}) \oplus  \bm{R}'_{1, k} \nonumber\\
                &=& \bm{R}'_{0,k}(\bm{P}_{3}) \oplus \bm{L}'_{0, k}(\bm{P}_0 \bm{P}_{3}) \oplus \bm{L}'_{0, k}(\bm{P}_2 \bm{P}_{3}) \oplus \bm{L}'_{1, k}(\bm{P}_1 \bm{P}_{2}\bm{P}_{3})  \nonumber \\
                && \: { }\oplus\bm{R}'_{0,k}(\bm{P}_{1}) \oplus \bm{L}'_{0, k}(\bm{P}_0 \bm{P}_{1})\oplus \bm{L}'_{0, k},
                \label{eq:differentialL4}
\end{IEEEeqnarray}
and $\bm{L}'_{1,k} = \bm{R}'_{0,k} \oplus \bm{R}'_{1,k}(\bm{P}_0)$, where $\bm{R}_{4,k}'$ is the $k$-th least significant bit plane of $\bm{R}_{4}'$.
Then, the problem become how to recover the permutation matrixes generated by Eq.~(\ref{eq:permutematrix}) by constructing some special differential plain-images.
\begin{itemize}
\item \textit{Determining $\bm{P}_1$ and $\bm{P}_3$ by choosing special $\bm{R}'_{0,k}$}

If $\bm{L}'_{0,k}$ is chosen of fixed value zero, one can get $\bm{L}'_{1,k} = \bm{R}'_{0,k}$. Substitute it into Eq.~(\ref{eq:differentialL4}), one has
\begin{equation}
\bm{L}_{4,k}'= \bm{R}'_{0, k}(\bm{P}_{1}) \oplus \bm{R}'_{0, k}(\bm{P}_{3}) \oplus \bm{R}'_{0, k}(\bm{P}_{1} \bm{P}_2 \bm{P}_3).
\label{eq:leftzeroR4}
\end{equation}
Assume a special differential image satisfy $\bm{L}'_{0,k}(i,j)\equiv 0$ and $\bm{R}'_{0,k}(i,j) = 0$ except that
\begin{equation}
\left\{\,
\begin{IEEEeqnarraybox}[][c]{rCl}
\IEEEstrut
\bm{R}'_{0,k}(i_1,j_1) &=& \alpha_1,\\
\bm{R}'_{0,k}(i_2,j_2) &=& \beta_1,
\IEEEstrut
\end{IEEEeqnarraybox}
\right.
\label{eq:condition1}
\end{equation}
where $\gcd (i_1j_2-i_2j_1,N)=1$ and $\alpha_1\neq \beta_1$. Observe Eq.~(\ref{eq:leftzeroR4}), one can see that one pixel of $\bm{R}'_{0, k}$ can influence at most three pixels of $\bm{L}'_{4,k}$. So, one can get $\binom{3}{1}\cdot\binom{3-1}{1} =6$ possible values of $(\bm{C}_1, \bm{C}_3, \bm{C}_1\bm{C}_2\bm{C}_3)$ by referring to Property~1. When condition of Proposition~1 exist, the matrix $(\bm{C}_1\bm{C}_2\bm{C}_3)$ can be recognized by checking which matrix whose elements are all greater than one\footnote{To simply analysis, the cases when $a, b\in\{0, 1\}$ and elements
of multiplication of three matrixes of set~(\ref{set:matrix}) are happen to be $(1\bmod N)$ are not discussed here.}. Since multiplication of two different matrixes of set~(\ref{set:matrix}) is not commutative when $(a+b)\neq 0$, $\bm{C}_1$ and $\bm{C}_3$ can be confirmed by checking whether $(\bm{C}_1^{-1}(\bm{C}_1\bm{C}_2\bm{C}_3)\bm{C}_3^{-1})$ has the form of the matrixes of set~(\ref{set:matrix}). Finally, the corresponding associated matrixes $\bm{P}_1$ and $\bm{P}_3$ can be obtained.

\begin{proposition}
When $a, b\not\in\{0, 1\}$, there is no 1's in the product of any three matrixes (including the same matrixes) of set~(\ref{set:matrix}).
\end{proposition}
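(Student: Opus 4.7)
The plan is to bound each entry of the triple product from below by a constant strictly greater than $1$, which immediately forces every entry to differ from $1$. Since the entries of the triple product are ordinary integer matrix entries (the footnote already separates out coincidences arising from reduction modulo $N$), the proof reduces to a positivity argument on the summands of the usual matrix product.

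First I would record the set of possible entries appearing anywhere in set~(\ref{set:matrix}): the five values $1$, $a$, $b$, $ab-1$, $ab+1$. Under the hypothesis $a, b \not\in \{0,1\}$, i.e.\ $a, b \geq 2$, all five are positive integers (in fact $a, b \geq 2$, $ab-1 \geq 3$, $ab+1 \geq 5$). So every entry of every matrix in the set is a positive integer of size at least $1$.

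Next I would write out a general entry of $M_1 M_2 M_3$, where each $M_i$ is chosen from the set. For indices $i, j \in \{0,1\}$,
\begin{equation*}
(M_1 M_2 M_3)_{ij} \;=\; \sum_{k,\,\ell \in \{0,1\}} (M_1)_{ik}\,(M_2)_{k\ell}\,(M_3)_{\ell j},
\end{equation*}
a sum of exactly four terms. By the previous step each of the three factors in any summand is a positive integer, so each summand is a positive integer, hence at least $1$. The four summands together therefore give $(M_1 M_2 M_3)_{ij} \geq 4 > 1$, and no entry can equal $1$.

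Because the bound is so loose, there is essentially no structural obstacle to overcome; the only thing one must be careful about is ensuring the entries are interpreted over the integers rather than modulo $N$ (which is the regime in which the proposition is used to distinguish the genuine product $\bm{C}_1\bm{C}_2\bm{C}_3$ from the other five candidates provided by Property~\ref{Pro:calculate}). If one instead wanted a sharper statement, for example that every entry is at least $ab-1$ or that the diagonal entries grow like $a^2 b$, one could inspect the four terms individually; but for the purpose of ruling out the value $1$, the trivial count ``four positive summands'' already suffices.
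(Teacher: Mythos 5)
Your proof is correct and follows essentially the same route as the paper: observe that under $a,b\geq 2$ every entry of every matrix in set~(\ref{set:matrix}) is a positive integer at least $1$, and then note that each entry of the triple product is a sum of four such products and hence at least $4>1$. The paper states this more tersely (``every element \dots is greater than or equal to one \dots according to the multiplication rule of matrix''), but the underlying argument is identical; your explicit lower bound of $4$ and the remark about working over the integers rather than modulo $N$ are just helpful elaborations.
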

\begin{proof}
When $a, b\not\in\{0, 1\}$, every element of the four matrixes in set~(\ref{set:matrix}) is greater than or equal to one. According to
multiplication rule of matrix, it can easily conclude that the proposition held.
\end{proof}

\item \textit{Determining $\bm{P}_0$ and $\bm{P}_2$ by choosing special $\bm{L}'_{0,k}$}

If $\bm{R}'_{0,k}$ is chosen of fixed value zero, it is easy to get
\begin{equation*}
\bm{R}_{4,k}'=\bm{L}'_{0, k}(\bm{P}_{0}) \oplus \bm{L}'_{0, k}(\bm{P}_{2}) \oplus \bm{L}'_{0, k}(\bm{P}_0 \bm{P}_1 \bm{P}_{2}).
\end{equation*}
Construct another special differential image satisfying $\bm{R}'_{0,k}(i,j) \equiv 0$ and $\bm{L}'_{0,k}(i,j)=0$ except that
\begin{equation}
\left\{\,
\begin{IEEEeqnarraybox}[][c]{rCl}
\IEEEstrut
\bm{L}'_{0,k}(i_1,j_1) &=& \alpha_2,\\
\bm{L}'_{0,k}(i_2,j_2) &=& \beta_2,
\IEEEstrut
\end{IEEEeqnarraybox}
\right.
\label{eq:condition2}
\end{equation}
where $\gcd (i_1j_2-i_2j_1,N) =1$ and $\alpha_2\neq \beta_2$. Then one can use the same method mentioned above to get the permutation matrixes $\bm{P}_0$ and $\bm{P}_2$.
\end{itemize}

\item \textit{Step 2) breaking value substitution}:

From Eq.~(\ref{eq:maskright}) and Property~\ref{Pro:differentialstructure}, one can get
\begin{equation}
\bm{l}_k  = \bm{V}_{5,k} \oplus \bm{R}_{4,k}   \label{eq:expressLKR4}
\end{equation}
and
\begin{IEEEeqnarray}{rCl}
\bm{r}_k  &=& \bm{V}_{4,k} \oplus \bm{L}_{4,k} \nonumber\\
            &=& \bm{V}_{4,k} \oplus \bm{R}_{3,k} \oplus \bm{R}_{4,k}(\bm{P}_3)\nonumber\\
            &=& \bm{V}_{4,k} \oplus \bm{V}_{2,k} \oplus \bm{L}_{2,k}\oplus \bm{R}_{4,k}(\bm{P}_3)\nonumber\\
            &=& \bm{V}_{4,k} \oplus \bm{V}_{2,k} \oplus \bm{R}_{1,k} \oplus \bm{R}_{2,k}(\bm{P}_1)\oplus \bm{R}_{4,k}(\bm{P}_3)\nonumber\\
            &=& \bm{V}_{4,k} \oplus \bm{V}_{2,k} \oplus \bm{V}_{0,k} \oplus \bm{L}_{0,k} \oplus   \bm{V}_{1,k}(\bm{P}_1) \oplus \bm{L}_{1,k}(\bm{P}_1) \oplus \bm{R}_{4,k}(\bm{P}_3)\nonumber\\
            &=& \bm{V}_{4,k} \oplus \bm{V}_{2,k} \oplus \bm{V}_{0,k} \oplus \bm{L}_{0,k} \oplus   \bm{V}_{1,k}(\bm{P}_1) \oplus \bm{R}_{0,k}(\bm{P}_1) \oplus \bm{R}_{1,k}(\bm{P}_0\bm{P}_1)\oplus \bm{R}_{4,k}(\bm{P}_3)\nonumber\\
            &=& \bm{V}_{4,k} \oplus \bm{V}_{2,k} \oplus \bm{V}_{0,k} \oplus \bm{L}_{0,k} \oplus   \bm{V}_{1,k}(\bm{P}_1) \oplus \bm{R}_{0,k}(\bm{P}_1) \nonumber \\
            && \oplus\: \bm{V}_{0,k}(\bm{P}_0\bm{P}_1)\oplus \bm{L}_{0,k}(\bm{P}_0\bm{P}_1) \oplus \bm{R}_{4,k}(\bm{P}_3).\nonumber
           %  \label{eq:expressRKR4}
\end{IEEEeqnarray}
Substitute $\bm{R}_{4,k}$ obtained in Eq.~(\ref{eq:expressLKR4}) into the above equation, one has
\begin{multline}
  \bm{V}_{5,k}(\bm{P}_3) \oplus \bm{V}_{4,k} \oplus \bm{V}_{2,k} \oplus \bm{V}_{0,k}  \oplus  \bm{V}_{1,k}(\bm{P}_1) \oplus  \bm{V}_{0,k}(\bm{P}_0\bm{P}_1) \\
    =  \bm{l}_k(\bm{P}_3) \oplus \bm{r}_k  \oplus \bm{L}_{0,k} \oplus \bm{R}_{0,k}(\bm{P}_1) \oplus \bm{L}_{0,k}(\bm{P}_0\bm{P}_1).
 \label{eq:resultR41}
\end{multline}

Referring to Eq.~(\ref{eq:expressR2KR3}) and  Eq.~(\ref{eq:resulttoR22}), one can get
\begin{IEEEeqnarray}{rCl}
\bm{R}_{4,k}  &=& \bm{V}_{3,k} \oplus \bm{L}_{3,k} \nonumber\\
            &=& \bm{V}_{3,k} \oplus \bm{R}_{2,k} \oplus \bm{R}_{3,k}(\bm{P}_2)\nonumber\\
           &=& \bm{V}_{3,k} \oplus \bm{V}_{1,k} \oplus \bm{R}_{0,k}\oplus\bm{V}_{0,k}(\bm{P}_0)
           \oplus\bm{L}_{0,k}(\bm{P}_0) \oplus\bm{R}_{3,k}(\bm{P}_2)\nonumber\\
            &=& \bm{V}_{3,k} \oplus \bm{V}_{1,k} \oplus \bm{R}_{0,k}\oplus\bm{V}_{0,k}(\bm{P}_0)
           \oplus\bm{L}_{0,k}(\bm{P}_0) \oplus \bm{V}_{2,k}(\bm{P}_{2}) \oplus \bm{V}_{0,k}(\bm{P}_{2})\nonumber\\
             && \oplus\:\bm{L}_{0,k}(\bm{P}_{2}) \oplus \bm{V}_{1,k}(\bm{P}_1\bm{P}_2) \oplus \bm{R}_{0}(\bm{P}_{1}\bm{P}_{2})\oplus \bm{V}_{0}(\bm{P}_0\bm{P}_{1}\bm{P}_{2})\oplus \bm{L}_{0}(\bm{P}_0\bm{P}_{1}\bm{P}_{2}).\nonumber
             %\label{eq:resulttoR42}
\end{IEEEeqnarray}
Hence Eq.~(\ref{eq:expressLKR4}) become
\begin{multline}
  \bm{V}_{5,k} \oplus \bm{V}_{3,k} \oplus \bm{V}_{1,k}\oplus\bm{V}_{0,k}(\bm{P}_0)\oplus \bm{V}_{2,k}(\bm{P}_{2})\oplus \bm{V}_{0,k}(\bm{P}_{2})\oplus \bm{V}_{1,k}(\bm{P}_1\bm{P}_2)\oplus \bm{V}_{0}(\bm{P}_0\bm{P}_{1}\bm{P}_{2})\\
  = \bm{l}_k \oplus \bm{R}_{0,k} \oplus\bm{L}_{0,k}(\bm{P}_0)\oplus \bm{L}_{0,k}(\bm{P}_{2}) \oplus \bm{R}_{0}(\bm{P}_{1}\bm{P}_{2})\oplus \bm{L}_{0}(\bm{P}_0\bm{P}_{1}\bm{P}_{2}).
 \label{eq:resultR42}
\end{multline}
Substitute $\bm{L}_{0,k}(\bm{P}_0\bm{P}_1)$ obtained in Eq.~(\ref{eq:resultR41}) into Eq.~(\ref{eq:resultR42}) yields
\begin{multline}
  \bm{V}_{5,k}(\bm{P}_3\bm{P}_2) \oplus \bm{V}_{4,k}(\bm{P}_2)   \oplus
   \bm{V}_{5,k} \oplus \bm{V}_{3,k} \oplus \bm{V}_{1,k}\oplus\bm{V}_{0,k}(\bm{P}_0) \\
   =\bm{l}_k(\bm{P}_3\bm{P}_2) \oplus \bm{r}_k(\bm{P}_2) \oplus \bm{l}_k \oplus \bm{R}_{0,k} \oplus \bm{L}_{0,k}(\bm{P}_0).\label{eq:resultR43}
\end{multline}
Substitute $\bm{L}_{0,k}(\bm{P}_0)$ obtained in Eq.~(\ref{eq:resultR43}) into Eq.~(\ref{eq:resultR41}), one can get
\begin{multline}
  \bm{V}_{5,k}(\bm{P}_3\bm{P}_2\bm{P}_1) \oplus \bm{V}_{4,k}(\bm{P}_2\bm{P}_1)   \oplus \bm{V}_{5,k}(\bm{P}_1)
  \oplus \bm{V}_{3,k}(\bm{P}_1)\oplus \bm{V}_{5,k}(\bm{P}_3) \oplus \bm{V}_{4,k} \oplus \bm{V}_{2,k} \oplus \bm{V}_{0,k}
   \\
   =  \bm{l}_k(\bm{P}_3\bm{P}_2\bm{P}_1) \oplus \bm{r}_k(\bm{P}_2\bm{P}_1) \oplus \bm{l}_k(\bm{P}_1)\oplus \bm{l}_k(\bm{P}_3) \oplus \bm{r}_k  \oplus \bm{L}_{0,k}.
\label{eq:resultR44}
\end{multline}

\item \textit{Step 3) decrypting another cipher-image encrypted with the same secret key}:

Since both Eq.~(\ref{eq:resultR43}) and Eq.~(\ref{eq:resultR44}) exist for any pair of plain-image and its corresponding cipher-image, so one can get
\begin{equation*}
\left\{\,
\begin{IEEEeqnarraybox}[][c]{rCl}
\IEEEstrut
\bm{L}^{\star}_{0,k} &=& \bm{l}^{\star}_{k}(\bm{P}_{3}\bm{P}_{2}\bm{P}_{1}) \oplus \bm{l}^{\star}_{k}(\bm{P}_{1}) \oplus  \bm{l}^{\star}_{k}(\bm{P}_{3}) \oplus \bm{r}^{\star}_{k}(\bm{P}_{2}\bm{P}_{1}) \oplus \bm{r}^{\star}_k
               \oplus  \bm{M}_{4,k}, \\
\bm{R}^{\star}_{0,k} &=& \bm{l}^{\star}_{k}(\bm{P}_{3}\bm{P}_{2}) \oplus \bm{l}^{\star}_k \oplus \bm{r}^{\star}_{k}(\bm{P}_{2}) \oplus  \bm{L}^{\star}_{0,k}(\bm{P}_{0}) \oplus
\bm{N}_{4,k}
\IEEEstrut
\end{IEEEeqnarraybox}
\right.
\end{equation*}
where
\begin{equation*}
\left\{
\begin{IEEEeqnarraybox}[][c]{rCl}
\IEEEstrut
\bm{M}_{4,k} &=& \bm{l}_{k}(\bm{P}_{3}\bm{P}_{2}\bm{P}_{1})\oplus \bm{r}_{k}(\bm{P}_{2}\bm{P}_{1}) \oplus \bm{l}_{k}(\bm{P}_{1}) \oplus  \bm{l}_{k}(\bm{P}_{3})  \oplus \bm{r}_k
               \oplus  \bm{L}_{0,k}, \\
\bm{N}_{4,k} &=& \bm{l}_{k}(\bm{P}_{3}\bm{P}_{2}) \oplus \bm{r}_{k}(\bm{P}_{2}) \oplus \bm{l}_k \oplus
\bm{R}_{0,k} \oplus  \bm{L}_{0,k}(\bm{P}_{0}).
\IEEEstrut
\end{IEEEeqnarraybox}
\right.
\end{equation*}
The above equation means that $\{\bm{M}_{4,k}\}_{k=1}^n$, $\{\bm{N}_{4,k}\}_{k=1}^n$, and $\{\bm{P}_{l}\}_{l=0}^3$ can work together to recover the $n$ least significant bit planes of the right part of $\bm{l}^{\star}$ and $\bm{r}^{\star}$, $\{\bm{L}^{\star}_{0,k}\}_{k=1}^n$ and $\{\bm{R}^{\star}_{0,k}\}_{k=1}^n$.
\end{itemize}

To verify the above analysis, experiments are made with $K_0=1234567/(2^{32}-1)$, $T=4$, and $A=64$ or 128. First, two special known-images are generated by modifying the image shown in Fig.~\ref{figure:differentialattackR1}a) to make the differential images satisfy condition~(\ref{eq:condition2}). Due to similarity of the two constructed plain-image, only one of them is shown in Fig.~\ref{figure:differentialattackR4}a). Similarly, the other two special known-image are constructed by modifying the image shown in Fig.~\ref{figure:differentialattackR1}a).
The encryption result of the plain-image ``Airplane" is shown in Fig.~\ref{figure:differentialattackR4}b). With the five chosen plain-images, some information about the secret key is obtained to decrypt the cipher-image shown in Fig.~\ref{figure:differentialattackR4}b) and the result is shown in Fig.~\ref{figure:differentialattackR4}c).
When only $A$ is changed as 128, the recover image of the corresponding cipher-image of the plain-image ``Airplane" is shown in Fig.~\ref{figure:differentialattackR4}d)
Once again, the experiment results demonstrate that the breaking performance is mainly by the integer $n$ in Property 2.

\begin{figure}[!htb]
\centering
\begin{minipage}{\figwidth}
\includegraphics[width=\textwidth]{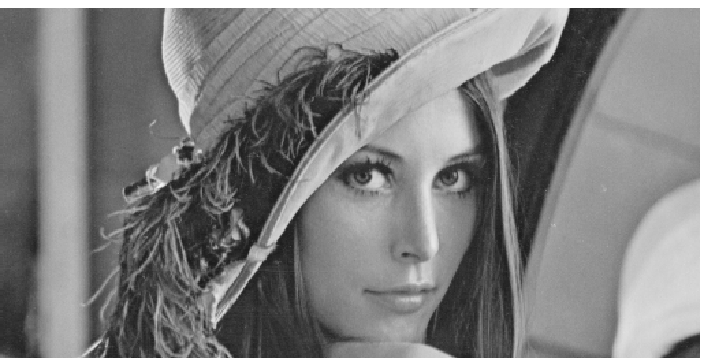}
a)
\end{minipage}
\begin{minipage}{\figwidth}
\includegraphics[width=\textwidth]{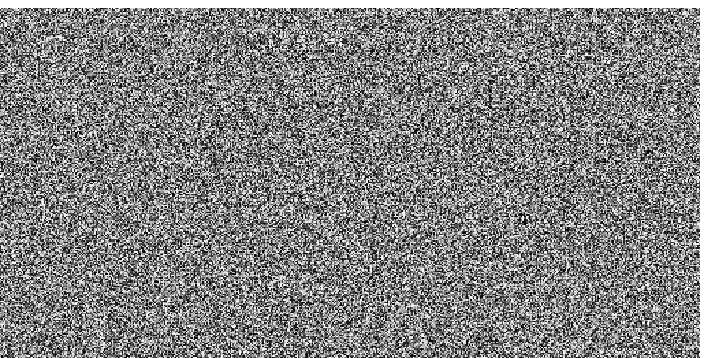}
b)
\end{minipage}\\
\begin{minipage}{\figwidth}
\includegraphics[width=\textwidth]{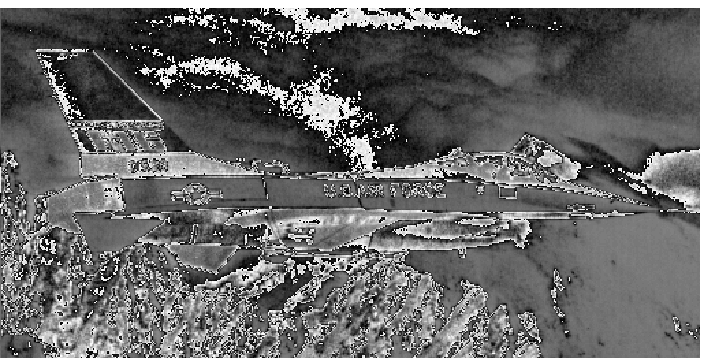}
c)
\end{minipage}
\begin{minipage}{\figwidth}
\includegraphics[width=\textwidth]{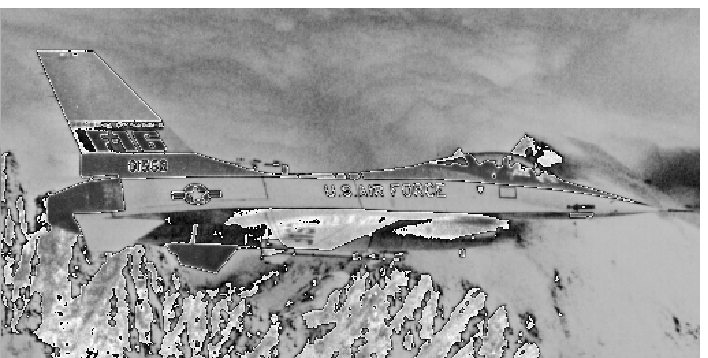}
d)
\end{minipage}
\caption{Differential attack on IEAS when $T=4$:
a) chosen plain-image;
b) cipher-image of plain-image ``Airplane";
c) the recovered plain-image with $A=64$;
d) the recovered plain-image with $A=128$.}
\label{figure:differentialattackR4}
\end{figure}

\section{Some other security defects of IEAS}

To make cryptanalysis on IEAS more complete, some other security defects of IEAS are given in this section.
\begin{itemize}

\item{The key space of IEAS is not big enough}

In \cite[Sec.~4]{Zhang:ImageCrypt:SCSF07}, it is claimed that key space of IEAS is $2^{32(T+2)}$ since PRNS $\{K_{l}\}_{l=0}^{T+1}$ has
$32(T+2)$ bits. However, this it not true since $\{K_{l}\}_{l=0}^{T+1}$ is generated by the Logistic map under initial condition $K_0$, who
has only $n_0$ unknown bits, where $n_0$ is precision length of computer. In fact, permutation matrixes $\{\bm{P}_l\}_{l=0}^{T-1}$ and mask matrixes $\{\bm{V}_l\}_{l=0}^{T+1}$
can compose an equivalent secret key of IEAS, and $\{\bm{P}_l\}_{l=0}^{T-1}$ has only $4^T$ possible cases. Since generation of $\{\bm{P}_l\}_{l=0}^{T-1}$
is also controlled by $\{K_{l}\}_{l=0}^{T+1}$, we can conclude that the real key space of IEAS is only $2^{n_0}\cdot T=2^{n_0}T$. In \cite{Zhang:ImageCrypt:SCSF07},
$n_0=32$, so the key space of IEAS is less than $2^{32}16=2^{36}$ considering $T\le 16$. Even computation precision of 64 bits is used, the key space is
only $2^{68}$, which is lower than expected size of a secure cipher, $2^{128}$, much.

\item{Insufficient sensitivity with respect to change of plain-image}

As well known in cryptography, sensitivity of ciphertext with respect to changes of plaintext is a very important property
measuring a secure encryption scheme. This property is especially important for secure image encryption schemes since a plain-image
and its watermarked version are often encrypted in the same time. In \cite[Sec.~4.2]{Zhang:ImageCrypt:SCSF07}, it was claimed that
IEAS satisfy the property well. However, IEAS fail to do it much due to the following points.
\begin{itemize}
\item The sole nonlinear operation is only used to expand PRBS, and no nonlinear operation, like S-box, is involved of handling plain-image;

\item There is no any operation generating carry bit toward lower level in the whole scheme, so a bit of plain-image can only influence the
bits at higher bit planes in the cipher-image;

\item If $2^n$ $(1\le n\le 7)$ divides variable $A$ in Eq.~(\ref{eq:difussion}), any change of the bits in the $k$-th bit plane of plain-image will only affect
the bits in the same bit plane of cipher-image for $k=1\sim n$.
\end{itemize}

\item{Superior performance of IEAS is questionable}

The cryptanalysis presented in the above section is based on the precondition of Property~2, namely $2^n$ $(1\le n\le 7)$ divides variable $A$ in Eq.~(\ref{eq:difussion}). This means that IEAS would become robust against the proposed attack if $A$ is odd. Under this condition, Property~2 is still
exist with some probability. So, the proposed attack maybe still valid with a little higher complexity. To show inferior performance of IEAS
is undoubted in any cases, IEAS is compared with its analogue, DES. The encryption complexity of DES on 128 plain-bits and the widely recognized
robustness of DES against differential attack under some rounds are shown in Table~\ref{table:comparision} \cite{Biham:DES:Cryptology97,Biham:EnhancingDL:LecNotes02}. In contrast, encryption complexity of IEAS on the same data and robustness against differential
attack are shown in Table~\ref{table:comparision} also. Although the details deriving attack complexity of IEAS of round number is larger than four are
not given here, one can conclude confidently that IEAS is much weaker than DES now.

\begin{table}[!htbp]
\centering
\caption{Comparison between IEAS and DES in terms of complexity of encrypting $128$ plain-bits and robustness against
differential attack, where CP and KP denote chosen plaintexts and known plaintexts, respectively.}
\begin{tabular}{c|c|c|c|c|c|c}
\hline
\multirow{3}{35pt}{Round Number} & \multicolumn{2}{c|}{Complexity}& \multicolumn{4}{c}{Attack} \\\cline{2-7}
           & \multirow{2}{*}{DES}          & \multirow{2}{*}{IEAS} & \multicolumn{2}{c|}{Data}    & \multicolumn{2}{c}{Success Rate}\\\cline{4-7}
           &                  &            &{DES}           &{IEAS}                     & {DES}   & {IEAS}\\ \hline
\hline 1   &$O(2^{9})$        &$O(2^{12})$ & $O(1)$CP       &2CP                        &$100\%$  &$100\%$\\
\hline 2   &$O(2^{10})$       &$O(2^{13})$ & $O(1)$CP       &2CP                        &$100\%$  &$100\%$\\
\hline 3   &$O(2^{10})$       &$O(2^{13})$ & $O(1)$CP       &3CP                        &$100\%$  &$100\%$\\
\hline 4   &$O(2^{11})$       &$O(2^{14})$ & $2^{4}$CP      &5CP                        &$100\%$  &$100\%$\\
\hline 12  &$O(2^{12})$       &$O(2^{15})$ & $2^{44}$KP     &$14$KP                     &$10\%$   &$100\%$\\
\hline 13  &$O(2^{12})$       &$O(2^{15})$ & $2^{45}$KP     &$14$KP                     &$10\%$   &$100\%$\\
\hline 16  &$O(2^{13})$       &$O(2^{16})$ & $2^{50}$KP     &$14$KP                     &$51.3\%$ &$100\%$\\
\hline
\end{tabular}
\label{table:comparision}
\end{table}
\end{itemize}

\section{Conclusion}

The security of an image encryption algorithm called IEAS, a block cipher composing of multiple rounds, was studied comprehensively in this paper.
Some properties of IEAS are derived to support differential attack on it when its key parameter is even. The detailed approaches for breaking IEAS,
when round number is less than five, are presented and can be easily extended to break the version of IEAS of higher rounds. In addition, it is found that
encryption results of IEAS is not sensitive with respect to changes of plain-image and its key space is not big enough. Cryptanalysis of IEAS shown in this paper
and comparison between IEAS and DES demonstrate IEAS is not attractive secure image encryption scheme and should not be used in applications requiring high level of
security.

\section*{Acknowledgement}

This research was supported by the National Natural Science Foundation of China (No.~61100216), Scientific Research Fund of Hunan Provincial Education Department (Nos.~11B124,~2011FJ2011), and start-up funding of Xiangtan University (Nos.~10QDZ39,~10QDZ40).

\bibliographystyle{elsarticle-num}
%\nocite*
\bibliography{SCSF}
\end{document}